\DeclareMathOperator*{\argmax}{arg\,max}
\newtheorem{lemma}[theorem]{Lemma}
\theoremstyle{definition}
\newtheorem{definition}{Definition}
\begin{document}

\begin{frontmatter}

\title{Safe Opponent Exploitation \\For Epsilon Equilibrium Strategies}

% \author{
%     Anonymous Submission
% }
\author[A]{\fnms{Linus}~\snm{Jeary}}
\author[B]{\fnms{Paolo}~\snm{Turrini}}
%\orcid{....-....-....-....}}
%\author[B]{\fnms{Third}~\snm{Author}\orcid{....-....-....-....}} % use of \orcid{} is optional

\address[A]{University of Warwick}
\address[B]{University of Warwick}

\begin{abstract}
In safe opponent exploitation players hope to exploit their opponents' potentially sub-optimal strategies while guaranteeing at least the value of the game in expectation for themselves.
Safe opponent exploitation algorithms have been successfully applied to small instances of two-player zero-sum imperfect information games, where Nash equilibrium strategies are typically known in advance. Current methods available to compute these strategies are however not scalable to desirable large domains of imperfect information such as No-Limit Texas Hold 'em (NLHE) poker, where successful agents rely on game abstractions in order to compute an equilibrium strategy approximation. This paper will extend the concept of safe opponent exploitation by introducing {\em prime-safe} opponent exploitation, in which we redefine the value of the game of a player to be the worst-case payoff their strategy could be susceptible to. This allows weaker epsilon equilibrium strategies to benefit from utilising a form of opponent exploitation with our revised value of the game, still allowing for a practical game-theoretical guaranteed lower-bound. We demonstrate the empirical advantages of our generalisation when applied to the main safe opponent exploitation algorithms.
\end{abstract}

\end{frontmatter}

\section{Introduction}
\noindent Following developments in AI for perfect information games such as chess \cite{campbell2002deep} and Go \cite{silver2016mastering}, AI for games with imperfect information such as No-Limit Texas Hold-em (NLHE) poker \cite{neumann1928theorie} was classified as the next milestone for AI's ability to beat the top players at their own game. %Games or scenarios of imperfect information refer to situations in which all players in a game are not able to observe all the information about the game states. A tactful demonstration of this phenomenon is found through the game of No-Limit Texas Hold-em (NLHE) poker \cite{neumann1928theorie}, since all opponent's cards are not observed until after a given round.

NLHE poker is an exceedingly large game; in the two-player heads-up (HU) setting alone it contains $10^{161}$ decision points in total \cite{johanson2013measuring}. For comparison, the game of chess has approximately $10^{43}$ \cite{shannon1950xxii}. In perfect information games, upon solving subgames in their decision trees, only the subtree representing the rest of the game needs to be considered. This notion is not promised in imperfect information games and significantly increases their complexity to solve \cite{brown2017safe}.

Significant progress has been made in the past decade, with Libratus being able to defeat top NLHE poker players in a one-on-one format using a game-theoretic approach \cite{brown2018superhuman}. Due to the two-player zero-sum environment Libratus was subject to, Nash equilibrium strategies are guaranteed to converge to a non-negative expected value in the long run regardless of the opponent's strategy \cite{nash1951non} \cite{neumann1928theorie}. The overarching aim for AIs like Libratus, therefore, is to try to find and play as close to a Nash equilibrium strategy as possible. 

Current methods to calculate Nash equilibrium strategies cannot be scaled to support the domain size of HU NLHE poker, hence abstraction techniques or algorithms must be applied before further computation. The abstraction techniques applied to NLHE poker consist of both information and action abstraction \cite{sandholm2015abstraction}. Information abstraction is exhibited by card abstraction, wherein approximately equivalent hands are clustered into \textit{buckets} to reduce decision points. Similarly to card abstraction, action abstraction is induced through clustering similar bet sizes together into buckets. After abstracting the game, an equilibrium strategy is computed and is subsequently mapped back to the full game resulting in an epsilon equilibrium strategy.

Playing accordingly to the computed epsilon equilibrium strategy will yield a desirable profit in expectation independent of the opponent being faced due to the approximate non-negative worst-case payoff. Whilst being a reliable foundation strategy, when playing against weaker opponents playing suboptimal strategies, the ability to effectively exploit their weaknesses would result in a significantly increased payoff. 

When attempting to aptly exploit an opponent in games like HU NLHE poker, we need to divert from playing an epsilon equilibrium strategy to try and extract value from our opponent's perceived weakness in their strategy without leaving ourselves open for exploitation. Previously presented opponent exploitation algorithms have aimed to achieve empirically sound performance as their main objective \cite{ganzfried2011game}. However, without the ability to theoretically evaluate and maintain the worst-case performance of these algorithms, we cannot ensure that the exploitative strategy being employed is not exploitable itself.
% Relying on only empirical performance in However, since we are granted a lot of game-theoretical privileges in the two-player zero-sum setting, we want to be able to ensure some theoretical guarantees in all our methods used within an AI. Additionally, we wish to refrain from relying on prior knowledge about our opponent or our exact domain of imperfect information. Ideally, we want a way to exploit our opponent without leaving ourselves open to exploitation during our periods of deviation.

We aim to extend the results produced from previous related work \cite{ganzfried2015safe} to allow epsilon equilibrium strategies to benefit from safe opponent exploitation. We then outline three amended algorithms to implement this generalisation and model their performances on simplified forms of poker.

\paragraph{Our Contribution.}

In this paper, we propose a generalisation of safe opponent exploitation algorithms which will allow agents playing a suboptimal epsilon-equilibrium strategy to gain additional utility compared to if they were to play according to their original static strategy. We do so by introducing prime-safety to ensure that an agent will not perform worse than the worst-case utility of the agent's initial static strategy when attempting to dynamically exploit the opponent. These algorithms are guaranteed to maintain this game-theoretic lower bound and will allow the exploitative player to safely punish the opponent's weaknesses throughout the game. This is a significant development due to the computation of Nash equilibria being a PPAD-hard problem \cite{daskalakis2009complexity} and approximate equilibrium strategies needing to be utilised for practical purposes in larger domains of imperfect information like HU NLHE. The techniques required in our algorithms can be scaled significantly higher than methods to compute exact optimal strategies for two-player zero-sum games and are consequently suitable for these larger domains \cite{brown2018depth}\cite{johanson2011accelerating}. The positive empirical performance of our algorithms in testing shows the potential for applying these opponent exploitation algorithms to improve the performance of agents in large imperfect information games.

Generalising prime-safety further to multiplayer games of imperfect information can be achieved similarly by reconstructing the value of the game away from the utility of playing a Nash equilibrium strategy. Although the exploitative strategy will have to be adapted, by similarly considering value as the value of the game to be the worst-case utility of our agent's static strategy,  similar algorithms guaranteeing safety could be developed for multiplayer zero-sum imperfect information games. The algorithm we describe allows any initial strategy to benefit from exploitation with a known lower bound. This is a valuable property as it remains an open question of what strategy should be deemed optimal even in very small domains of multiplayer zero-sum imperfect information games. As a result, we cannot select from a set of game-theoretically guaranteed optimal strategies in any way  \cite{ganzfried2018successful}.

Real-life implementations of safe opponent exploitation algorithms have previously suggested their application to cyber security games \cite{ganzfried2015safe}\cite{blythe2011testing}\cite{pita2012robust}. Our generalised algorithms allow for security game agents playing epsilon equilibrium strategies to capitalise on these weaknesses whilst maintaining a suitable known lower bound.

\paragraph{Paper Structure.} In Section \ref{sec:preliminaries} we present the key definitions and facts needed to present our algorithm, together with the variants considered in related literature. Section \ref{sec:prime} presents a theoretical analysis of prime-safety and Section \ref{sec:experiments} the experimental evaluation. We conclude with a few pointers to potential developments.

\section{Preliminaries}\label{sec:preliminaries}
\paragraph{Extensive-form games.}
Given a representative rooted game tree, a finite imperfect information extensive-form game is a game played with a finite set of $N$ players $\mathcal{P}$ and a set of histories $H$ including the initial empty history $\phi$ where the utilities for each player in $\mathcal{P}$ are determined by terminal states of the tree. $H$ consists of all the rooted terminal sequences of actions from a finite set $\mathcal{A}$ in the game tree, wherein each player $i$ is assigned to every strict subsequence in $H$. Moreover, each player's available actions accessible at a point in $h\in H$ from the game tree are denoted by $A(h)\in \mathcal{A}$. Additionally, there exists a Chance player $c$ representing chance events in the game where at history $h$, $c$ plays an action $a\in A(h)$ according to a fixed probability distribution known publicly by all players in $\mathcal{P}$. For each player $i$, their accessible nodes in the tree are split into information sets $I_i\in \mathcal{I}_i$ such that for all states in the same information set the player cannot distinguish between the sets and therefore must select an action with a constant distribution between all such states. Therefore, this action selection for a player $i$ at each accessible information set $I_i\in \mathcal{I}_i$ is formalised as $A(I_i)$. A pure strategy $s_i$ from the pure strategy space $S_i$ for a given player $i$ is a vector that directly determines the exact action player $i$ will take from $A(I_i)$, subject to $I_i$. Further, a mixed strategy $\sigma_i \in \Sigma_i$, where $\Sigma_i$ is the mixed strategy space for player $i$, is a function that at each assigns a probability distribution the player will employ at each $I_i$ over $A(I_i)$. A strategy profile $\sigma$ is a tuple of all player strategies $\sigma_i$ for $i=\{1,2,\ldots,N\}$ and $u_i(\sigma_i,\sigma_{-i})$ represents the utility function or expected payoff for player $i$ playing with respect to mixed strategy $\sigma_i$ against $\sigma_{-i}$. 
In applications such as poker, we are interested in examining finitely repeated games describing games repeated over finitely many rounds $T$, such that the total payoff for a player $i$ in the game will amount to the accumulated payoffs obtained over all rounds of play.

\paragraph{Equilibria.}
Let $\sigma_{-i}$ denote the mixed strategy profile of all other players aside from $i$ (For a two-player game, this will simply refer to the opponent strategy given we are playing as player $i$). For a given player $i$, their \emph{best response strategy} to $\sigma_{-i}$ is defined to be any strategy $\sigma_{i}$ satisfying, $\sigma_{i}=\underset{{\sigma{'}_i}\in{ \Sigma_i}}\argmax$ $u_i(\sigma{'}_i,\sigma_{-i})$.

The \emph{nemesis strategy} to a player $i$ is a pure strategy profile denoting a best response strategy to $\sigma_i$.

A strategy profile $\sigma$, where $\sigma=(\sigma_i,\sigma_{-i})$ is said to be a \emph{Nash equilibrium strategy} if $u_i(\sigma_i,\sigma_{-i})\geq u_i(\sigma{'}_i,\sigma_{-i})$ $\forall$ $\sigma{'}_i \in \Sigma_i$.

Similarly, given $\epsilon>0$, a strategy profile $\sigma$ is said to be an \emph{$\epsilon$-equilibrium strategy} if $u_i(\sigma_i,\sigma_{-i})\geq u_i(\sigma{'}_i,\sigma_{-i})-\epsilon$ $\forall$ $\sigma{'}_i \in \Sigma_i$.
Many Nash equilibria strategies may exist within a given game and Nash equilibrium strategies have been proven to exist in all non-cooperative games. Additionally, in two-player zero-sum games, playing a Nash equilibrium strategy guarantees that any player who plays the strategy will not lose in expected value independently of the opponent's strategy.

In two-player zero-sum games, John von Neumann's minimax theorem states that,

$\underset{\sigma_i\in\Sigma_i}\max\underset{\sigma_{-i}\in\Sigma_{-i}}\min u_i(\sigma_i,\sigma_{-i}) = \underset{\sigma_{-i}\in\Sigma_{-i}}\min\underset{\sigma_i\in\Sigma_i}\max u_i(\sigma_i,\sigma_{-i}).$

We can resultantly define the \emph{value of the game} to player $i$ to be denoted as $v_i$ to be the expected utility achieved by playing a minimax strategy.

\paragraph{Epsilon-safe Strategies.}
The exploitability of a strategy for player $i$ is defined as, $\text{expl}(\sigma_i)=v_i-\underset{{\sigma{'}_{-i}}\in{ \Sigma_{-i}}}\min u_i(\sigma_i,\sigma{'}_{-i})$.

Note that there always exists a pure best response strategy $s_{-i}\in S_{-i}$ to a mixed strategy $\sigma_{i}\in \Sigma_{i}$.

For a given player $i$, a strategy $\sigma_i \in \Sigma_i$ is referred to as being $\epsilon$-safe if the utility gained from playing the nemesis strategy $s_{-i}$ against $\sigma_i$ is within $\epsilon$ of $v_i$. Since the nemesis strategy is the worst-case strategy that maximises utility loss for player $i$, we can formally characterise a strategy $\sigma_i$ to be $\epsilon$-safe if, 
$
u_i(\sigma_i,s_{-i})\geq v_i-\epsilon, \forall s_{-i}\in S_{-i}, \text{for some } \epsilon \geq 0.
$
The set of strategies $\sigma_i\in \Sigma_i$ satisfying $\epsilon$-safety are defined to be of the set SAFE($\epsilon$). Note that the set of strategies within SAFE(0) represent the Nash equilibria strategies for the game, as the value of $v_i$ is guaranteed in expectation.
\paragraph{Counterfactual Regret Minimisation.}
Counterfactual Regret Minimisation (CFR) \cite{zinkevich2007regret} is a repeated tree traversal algorithm that, via $T$ iterations of self-play, will assign a level of \textit{regret} to each action in the decision tree. The CFR algorithm computes an updated strategy profile $\sigma^t$ at each iteration $t\in \{1\dots T\}$ and allows the average regret over all actions per information set to converge to zero.

CFR requires multiple full tree traversals to converge to an equilibrium strategy, consequently there exist multiple variants of counterfactual regret minimisation algorithms that aim to reduce the convergence rate and the required amount of the computation. Monte Carlo based sampling techniques can be utilised to reduce the number of tree traversals whilst ensuring that immediate counterfactual regrets are unaffected in expectation. Such algorithms are known as Monte Carlo counterfactual regret minimisation (MCCFR) algorithms. A common variant of MCCFR we use in this paper is external-sampling MCCFR \cite{lanctot2009monte} and is often used in large imperfect information games \cite{bowling2015heads}.

\section{Related Contributions}
McCracken and Bowling \cite{mccracken2004safe} introduce the notion of $\epsilon$-safe strategies/policies, which are strategies that guarantee a worst-case utility within $\epsilon$ of $v_i$ for a player $i$. They subsequently introduce a safe policy selection algorithm (SPS) for strategic form games. The algorithm allows the agent to play a dynamic $\epsilon^{(t)}$-safe strategy such that the value of $\epsilon^{(t)}$ is updated at each iteration of the game. $\epsilon^{(t)}$ is updated with respect to the opponent's model and their actions observed throughout the game by the rule $\epsilon^{(t)}$ is $\epsilon^{(t+1)}\leftarrow \epsilon^{(t)}+f(t+1)+u_i(\sigma^t_{-i},a^{(t)}_{-i})-v_i$
Where the function $f:\mathbb{N}\mapsto \mathbb{R}$ is a decay function subject to the constraints $f(t)>0$ $\forall$ $t\in\mathbb{N}$ and $\underset{T\rightarrow\infty}{\lim}\left(\frac{\sum^T_{t=1}f(t)}{T}\right)=0$.

The SPS algorithm was shown to perform experimentally well using the function assignment $f(t)=\frac{\beta}{t}$, however, the proof of safety for the algorithm is only ensured in the infinitely repeated setting (as $T\rightarrow\infty$). Since we are only interested in the finitely repeated domain, this is the key theoretical drawback of this paper.

Ganzfried and Sandholm \cite{ganzfried2011game} present an algorithm (Deviation-Based
Best Response DBBR), which attempts to exploit the opponent according to their opponent model, by playing a best response to the model under some conditions. DBBR can model opponents in imperfect information games, where private information is not necessarily observed at the end of each iteration of play. The main drawback of this work is due to the non-existence of a lower bound guarantee when switching to an exploitative strategy, therefore the reliance of the algorithm lies on the empirical performance over a small sample of opponent archetypes. This paper conjectured that it is impossible for an opponent exploitation algorithm to exist for a player $i$ without jeopardising the value of the game $v_i$ at each iteration of the game in expectation. This claim was disproved thereafter by the same authors in a 2015 paper which is central to the algorithms and methods used in this project.

Ganzfried and Sandholm \cite{ganzfried2015safe} introduce several safe opponent exploitation algorithms that guarantee that the agent will not risk losing the value of the game $v_i$ at each iteration of the game in expectation using methods from game theory. This guarantee encapsulates the notion of a \emph{safe} opponent exploitation algorithm. They propose a way to characterise situations for when safe opponent exploitation is possible. These situations are given to us when our opponent plays a strategy $\sigma_{-i}$ such that there exists an equilibrium strategy $\sigma^{*}_i \in \Sigma_i$ for player $i$ where $\sigma_{-i}$ is not a best response to $\sigma^{*}_i$.

These types of strategies are referred to as gift strategies and are the crux of the safe opponent exploitation algorithms proposed. The algorithms used a variable $k^t$ to represent the number of gifts player $i$ has been given by their opponent at iteration t ($1\leq t \leq T$).

Liu et. al \cite{liu2022safe} present a Safe Exploitation Search (SES) algorithm that uses real time search to attempt to exploit the opponent in subgames whilst maintaining safety with respect to the player's stage-game strategy. We want to focus on computing safe exploitation strategies in repeated games, where the notion of safety is distinct from the stage game environment.
%======================================================================

\section{Prime-Safe Exploitation Algorithms}\label{sec:prime}
%======================================================================
After abstracted settings of imperfect information games are processed through an algorithm like CFR, the calculated Nash equilibrium strategy will have to be translated in some way to the full game. After doing so, since some detail may have been lost about the full game in the abstraction, we will have derived an $\epsilon$-equilibrium strategy for the game. As a result, we are not able to apply the ``Best Equilibrium'' based safe opponent exploitation algorithms proposed by Ganzfried and Sandholm \cite{ganzfried2015safe} since we lose the ability to use a strategy in SAFE(0).

All the algorithms proposed in the previous paper require at least one exact Nash equilibrium strategy to be computed to maintain $v_i$ at each iteration of play. Since we want to have the ability to exploit an opponent with at least a single $\epsilon$-equilibrium strategy, we need to make the appropriate adaptations to the prior algorithms in order to achieve this relaxation. 

Due to these aforementioned details, we construct modifications of the three algorithms presented in \cite{ganzfried2015safe}, BEFEWP, BEFFE and RWYWE to attempt to successfully exploit opponents whilst conserving some form of game-theoretical guarantee concurrently.

% $\epsilon$-best response to be computed at each iteration of the game. $\epsilon$-best response strategies are computationally expensive to compute proportional to the size of the decision tree \cite{johanson2007computing}, thus we avoid using any algorithms requiring calculation of these strategies. 

\begin{definition}
Define $v_i{'}(\sigma_i)$ to be the expected utility of the best response strategy against player $i$, i.e. \begin{gather*}
v_i{'}(\sigma_i)=min_{\sigma_{-i}}u_i(\sigma_i,\sigma_{-i})
\end{gather*}
%$=v_i-\beta$ for $\beta = \text{expl}(\sigma_i)$.
\end{definition}

\begin{definition}
A strategy is said to be prime-safe in a repeated game if at each period of the game, the expected utility of a best response to the strategy guarantees at least a value of $v_i{'}(\sigma_i)$.
\end{definition}

\begin{definition}
An algorithm is said to be prime-safe in a repeated game if the strategy $\sigma^t_i$ maintained by the algorithm is prime-safe over $T$ steps.
\end{definition}

\begin{definition}
For a given player $i$, a strategy $\sigma_i \in \Sigma_i$ is referred to as being $\epsilon$-prime-safe if the utility gained from playing the nemesis strategy $s_{-i}$ against $\sigma_i$ is within $\epsilon$ of $v_i{'}(\sigma_i)$. Since the nemesis strategy is the worst-case strategy that maximises utility loss for player $i$, we can formally characterise a strategy $\sigma_i$ to be $\epsilon$-prime-safe if, 
\begin{gather*}
u_i(\sigma_i,s_{-i})\geq v_i{'}(\sigma_i)-\epsilon, \forall s_{-i}\in S_{-i}, \text{for some } \epsilon \geq 0.
\end{gather*}
The set of strategies $\sigma_i\in \Sigma_i$ satisfying $\epsilon$-prime-safety are defined to be of the set PSAFE($\epsilon$).
\end{definition}

\paragraph{Algorithm Description.}
The Epsilon Equilibrium and Full Exploitation When Possible (EEFEWP) algorithm is a generalisation of the aforementioned BEFEWP algorithm. The aim of the algorithm is to keep playing our epsilon-equilibrium strategy against our opponent until we can accumulate enough gifts to exploit by switching to a best response strategy with respect to our opponent model $M(\sigma_{-i})$ whilst ensuring prime-safety via the update rule for the variable $k^t$. 

To generalise the BEFEWP algorithm, we change the value of the game from $v_i$ to $v_i{'}(\sigma_i)$. This adaptation seems slight, however, some important distinctions and nuances arise as a result. We cannot assume we know of a collection of equilibria strategies like we can when we are playing according to $v_i$, since we may only calculate one $\epsilon$-equilibrium strategy for our game through a variant of CFR. Therefore, we cannot assume we are able to select a best equilibrium strategy according to the opponent model so instead we play any $PSAFE(0)$ strategy. Additionally, the exploitability of playing the best response strategy to the opponent model $M(\sigma_{-i})$ and the update rule for gift accumulation handled by $k^t$ are both calculated with respect to $v_i{'}(\sigma_i)$.
\begin{algorithm}%[!b]
\caption{EEFEWP for two-player zero-sum extensive-form games of imperfect information
where opponent’s private information is observed at the end of the game}\label{alg:cap4}
\begin{algorithmic}
\State \textbf{Input:} Strategy $\sigma_i^0$ for player $i$
\State $v_i{'}(\sigma_i) \gets min_{\sigma_{-i}}u_i(\sigma_i^0,\sigma_{-i})$
\State $k^1\gets0$
\For{$t = 1$ to $T$}
    \State $\sigma^t_{BR} \gets argmax_{\sigma_i}u_i(\sigma_i,M(\sigma_{-i})) $
    \State $\epsilon' \gets v_i{'}(\sigma_i) -min_{\sigma_{-i}}u_i(\sigma^t_{BR},\sigma_{-i})$
    \If{$\epsilon'\leq k^t$} 
        \State $\sigma^t_i \gets\sigma^t_{BR}$ %Play full BR
    \Else
        \State $\sigma^t_i \gets \sigma_i^0$
    \EndIf 
    \State Play action $a^t_i$ according to $\sigma^t_i$
    \State The opponent plays action $a^t_{-i}$ with observed private 
    \State information $\theta^t_{-i}$, according to the unobserved 
    \State distribution $\sigma^t_{-i}$
    \State Update $M$ with opponent's actions, $a^t_{-i}$, and his 
    \State private information $\theta^t_{-i}$
    \State $\tau^t_{-i} \gets$ strategy for the opponent that plays a best 
    \State response against $\sigma^t_i$ subject to the constraint that it 
    \State plays $a^t_{-i}$  on the path of play with private 
    \State information $\theta^t_{-i}$
    \State $k^{t+1}\gets k^t+u_i(\sigma^t_{i},\tau^t_{-i})-v_i{'}(\sigma_i)$ %kt - vstar + expected utility of playing strat_i against the opponents action
\EndFor
\end{algorithmic}
\end{algorithm}

It is not immediately clear how often agents playing noisier $\epsilon$-equilibria strategies will attempt exploitation and to what degree of success, thus the algorithm's merits will rely on positive empirical performance. Since we only switch to playing an exploitative strategy under unpredictable circumstances from the opponent's perspective, we can expect that rationally playing a best response strategy against an accurate opponent model will grant a strong gain in value in expectation. Moreover, the game-theory-based construction of EEFEWP still allows us to guarantee a lower bound in performance independently from our opponent. 

Now by updating the relevant lines in the algorithm we derive the Epsilon Equilibrium and Full Exploitation When Possible (EEFEWP) algorithm below. 

\begin{lemma}\label{thm:a}
If $\sigma_i^0$ is updated according to Extensive-Form EEFEWP, Then $k^t\geq0$ $\forall$ $t\geq1$.
\end{lemma}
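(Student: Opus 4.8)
The plan is a straightforward induction on $t$, mirroring the proof of safety for BEFEWP in \cite{ganzfried2015safe} but with $v_i$ replaced throughout by the constant $v_i{'}(\sigma_i):=\min_{\sigma_{-i}}u_i(\sigma_i^0,\sigma_{-i})$ fixed in the first line of the algorithm. For the base case, the initialisation gives $k^1=0\geq 0$. For the inductive step, assume $k^t\geq 0$; I will show $k^{t+1}=k^t+u_i(\sigma^t_i,\tau^t_{-i})-v_i{'}(\sigma_i)\geq 0$ by splitting on the branch taken by the algorithm at iteration $t$.

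First I would record the one structural fact that drives both cases: since $\tau^t_{-i}$ is a best response for the opponent to $\sigma^t_i$ restricted to the set of opponent strategies consistent with the observed action $a^t_{-i}$ and private information $\theta^t_{-i}$, and this set is a (nonempty) subset of $\Sigma_{-i}$, minimising $u_i(\sigma^t_i,\cdot)$ over the smaller set cannot go below the unconstrained minimum; hence $u_i(\sigma^t_i,\tau^t_{-i})\geq \min_{\sigma_{-i}}u_i(\sigma^t_i,\sigma_{-i})$. Now, in the exploitation branch ($\epsilon'\leq k^t$) we have $\sigma^t_i=\sigma^t_{BR}$, so by the displayed inequality and the definition $\epsilon' = v_i{'}(\sigma_i)-\min_{\sigma_{-i}}u_i(\sigma^t_{BR},\sigma_{-i})$ we get $u_i(\sigma^t_i,\tau^t_{-i})\geq v_i{'}(\sigma_i)-\epsilon'$, whence $k^{t+1}\geq k^t-\epsilon'\geq 0$ using the branch condition. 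In the non-exploitation branch ($\epsilon'>k^t$) we have $\sigma^t_i=\sigma_i^0$, so $u_i(\sigma^t_i,\tau^t_{-i})\geq \min_{\sigma_{-i}}u_i(\sigma_i^0,\sigma_{-i})=v_i{'}(\sigma_i)$ directly from the definition of $v_i{'}$, giving $k^{t+1}\geq k^t\geq 0$. Either way the invariant is preserved, completing the induction.

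The only genuinely delicate point — and the one I would state carefully — is the justification that the constrained opponent best response $\tau^t_{-i}$ still yields payoff at least the relevant worst-case value: this needs the constraint set (strategies agreeing with $(a^t_{-i},\theta^t_{-i})$ on the realised path) to be nonempty, which holds because $\sigma^t_{-i}$ itself witnesses it, and it needs the subset-monotonicity of the min. Everything else is arithmetic. I would also flag explicitly that, unlike the original BEFEWP analysis, no appeal to SAFE$(0)$ is made anywhere; the bound $u_i(\sigma_i^0,\cdot)\geq v_i{'}(\sigma_i)$ is true by construction for \emph{any} initial strategy $\sigma_i^0$, which is precisely what allows the relaxation from exact equilibria to $\epsilon$-equilibria.
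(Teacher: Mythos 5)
Your proof is correct and takes essentially the same route as the paper's: an induction on $t$ whose invariant is that the strategy played at step $t$ has prime-exploitability at most $k^t$, combined with the observation that the constrained best response $\tau^t_{-i}$ cannot push the payoff below the unconstrained minimum. In fact your write-up is more complete than the paper's, which states the exploitability bound $u_i(\sigma^t_i,\sigma^t_{-i})\geq v_i{'}(\sigma_i)-k^t$ but leaves the case split on the branch condition and the final step $k^{t+1}\geq 0$ implicit, whereas you carry both through explicitly.
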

\begin{proof}\ref{thm:a}. 
Since we have $k^1=0$, if $k^t\geq0$ for some $t\geq1$ then by construction of the EEFEWP algorithm, $\sigma_i^t$ has exploitability of at most $k^t$. Therefore,

$u_i(\sigma^t_i,\sigma^t_{-i})\geq v_i-\text{expl}(\sigma_i)-k^t = v_i{'}(\sigma_i)-k^t$.
\end{proof}
\begin{lemma}\label{thm:b}
EEFEWP is prime-safe.
\end{lemma}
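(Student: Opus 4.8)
The plan is to mirror the safety proof for BEFEWP from \cite{ganzfried2015safe}, but with the baseline $v_i{'}(\sigma_i)$ replacing $v_i$ throughout, letting Lemma~\ref{thm:a} play the role that non-negativity of the gift count plays in the original argument. Concretely, I want to establish that against \emph{any} opponent the total expected utility accumulated by the strategies $\sigma_i^t$ produced by EEFEWP over the $T$ rounds is at least $T\cdot v_i{'}(\sigma_i)$, which is precisely the guarantee demanded by the definitions of a prime-safe strategy and a prime-safe algorithm.

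First I would telescope the update rule for $k^t$. Rearranging $k^{t+1}=k^t+u_i(\sigma^t_i,\tau^t_{-i})-v_i{'}(\sigma_i)$ gives $u_i(\sigma^t_i,\tau^t_{-i})-v_i{'}(\sigma_i)=k^{t+1}-k^t$, so summing over $t=1,\dots,T$ with $k^1=0$ yields $\sum_{t=1}^{T} u_i(\sigma^t_i,\tau^t_{-i}) = T\cdot v_i{'}(\sigma_i)+k^{T+1}$. Lemma~\ref{thm:a} gives $k^{T+1}\geq 0$, hence $\sum_{t=1}^{T} u_i(\sigma^t_i,\tau^t_{-i})\geq T\cdot v_i{'}(\sigma_i)$, and this holds along every path of play. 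Under the hood, Lemma~\ref{thm:a} is exactly where the two branches of the conditional are used: in the $\sigma^t_i\gets\sigma^0_i$ branch, $u_i(\sigma^0_i,\tau^t_{-i})\geq v_i{'}(\sigma_i)$ by definition of $v_i{'}$, so $k$ cannot decrease; in the $\sigma^t_i\gets\sigma^t_{BR}$ branch, the guard $\epsilon'\leq k^t$ together with the definition $\epsilon' = v_i{'}(\sigma_i)-\min_{\sigma_{-i}}u_i(\sigma^t_{BR},\sigma_{-i})$ forces $u_i(\sigma^t_{BR},\tau^t_{-i})\geq v_i{'}(\sigma_i)-k^t$, so again $k^{t+1}\geq 0$.

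Second I would connect $u_i(\sigma^t_i,\tau^t_{-i})$, the quantity the bookkeeping uses, to the utility player $i$ actually receives against the opponent's true (unobserved) strategy $\sigma^t_{-i}$. Here I would reuse the key observation from \cite{ganzfried2015safe}: since $\tau^t_{-i}$ is a best response for the opponent to $\sigma^t_i$ subject only to being consistent with the realised action $a^t_{-i}$ and private information $\theta^t_{-i}$ on the path of play, and the opponent's actual strategy is one such consistent strategy, conditioning on the realised $(a^t_{-i},\theta^t_{-i})$ the opponent cannot have done better for itself than $\tau^t_{-i}$ does, so $E\!\left[u_i(\sigma^t_i,\sigma^t_{-i})\right]\geq E\!\left[u_i(\sigma^t_i,\tau^t_{-i})\right]$ with the expectation taken over the randomness of play up to round $t$. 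Chaining this over $t$ and combining with the telescoped bound gives $E\!\left[\sum_{t=1}^T u_i(\sigma^t_i,\sigma^t_{-i})\right]\geq T\cdot v_i{'}(\sigma_i)$, i.e.\ EEFEWP is prime-safe.

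The main obstacle is this last step: making the passage from $\tau^t_{-i}$ to the opponent's real strategy fully rigorous requires the same careful conditional-expectation/sampling argument as in the original safe-exploitation proof — the observed action is itself drawn from $\sigma^t_{-i}$, so one must argue per round and then compose the inequalities along the history — and one must verify that nothing in that argument relied on the baseline being an exact game value rather than the worst-case value $v_i{'}(\sigma_i)$ of a fixed strategy. Intuitively it does not, since $v_i{'}(\sigma_i)$ enters only as a constant shift in the $k^t$ recursion, but this is the point that warrants the most care.
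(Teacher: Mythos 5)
Your proposal is correct and follows essentially the same route as the paper's proof: both telescope the $k^t$ update rule, invoke Lemma~\ref{thm:a} for $k^{T+1}\geq 0$, and use the inequality $E[u_i(\sigma^t_i,\tau^t_{-i})]\leq E[u_i(\sigma^t_i,\sigma^t_{-i})]$ (inherited from the Ganzfried--Sandholm argument, since $\tau^t_{-i}$ is a constrained best response for the opponent) to conclude $\sum_{t=1}^{T}E[u_i(\sigma^t_i,\sigma^t_{-i})]\geq T\,v_i{'}(\sigma_i)$. The only difference is presentational --- you telescope to an exact identity first and apply the $\tau$-to-$\sigma$ inequality afterwards, whereas the paper interleaves the two inside a recursive unrolling of $E[k^{n+1}]$ --- and your explicit flagging of the conditional-expectation step is, if anything, slightly more careful than the paper's treatment.
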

\begin{proof}\ref{thm:b}.
Recall from the construction of the update rule for $k^t$ that,
\begin{align*}
k^{n+1}& = k^n+u_i(\sigma^n_{i},\tau^n_{-i})-v_i{'}(\sigma_i)\\
E[k^{n+1}]&=E[k^{n}+u_i(\sigma^{n}_i,\tau^{n}_{-i})-v_i{'}(\sigma_i)]\\
&= E[u_i(\sigma^{n}_i,\tau^{n}_{-i})]-E[v_i{'}(\sigma_i)]+E[k^{n}]\\
&\leq u_i(\sigma^{n}_i,\sigma^{n}_{-i})-v_i{'}(\sigma_i)+E[k^{n}]\\
&\leq\left[\sum^{n}_{t=1}\left(u_i(\sigma^t_i,\sigma^t_{-i})-v_i{'}(\sigma_i)\right)\right]+E[k^1]\\
&=\sum^{n}_{t=1}u_i(\sigma^t_i,\sigma^t_{-i})-nv_i{'}(\sigma_i)
\end{align*}
Therefore, by rearranging the above and letting $n=T$.
\begin{gather*}
\sum^{T}_{t=1}u_i(\sigma^t_i,\sigma^t_{-i})\geq E[k^{T+1}]+Tv_i{'}(\sigma_i).
\end{gather*}
Since $k^{T+1}\geq0 \implies E[k^{T+1}]\geq0$, then combining the above with Lemma 1,
\begin{gather*}
\sum^{T}_{t=1}u_i(\sigma^t_i,\sigma^t_{-i})\geq Tv_i{'}(\sigma_i).
\end{gather*}
Since $u_i(\sigma^t_i,\sigma^t_{-i})\geq v_i-\beta=v_i{'}(\sigma_i)$, where $\beta=\text{expl}(\sigma_i)$ EEFEWP is $\beta$-safe and equivalently prime-safe.
\end{proof}
Following the construction of EEFEWP, we now similarly adapt the BEFFE and RWYWE algorithm to fit our new constraints. We rename the modified versions of the algorithms to be EEFFE (Epsilon Equilibrium Followed by Full Exploitation) and PRWYWE (Prime Risk What You've Won in Expectation) respectively. 

The EEFFE algorithm is similar to EEFEWP but does not attempt to exploit the opponent when possible at each iteration but will attempt to fully exploit towards the end of the total number of rounds that will be played $T$.
\begin{algorithm}%[t]
\caption{EEFFE for two-player zero-sum extensive-form games of imperfect information
where opponent’s private information is observed at the end of the game}\label{alg:cap5}
\begin{algorithmic}
\State \textbf{Input:} Strategy $\sigma_i^0$ for player $i$
\State $v_i{'}(\sigma_i) \gets min_{\sigma_{-i}}u_i(\sigma_i^0,\sigma_{-i})$
\State $k^1\gets0$
\For{$t = 1$ to $T$}
    \State $\sigma^t_{BR} \gets argmax_{\sigma_i}u_i(\sigma_i,M(\sigma_{-i})) $
    \State $\epsilon' \gets v_i{'}(\sigma_i) -min_{\sigma_{-i}}u_i(\sigma^t_{BR},\sigma_{-i})$
    \If{$k^t \geq (T-t+1)(v_i{'}(\sigma_i)-\epsilon')$} 
        \State $\sigma^t_i \gets\sigma^t_{BR}$ %Play full BR
    \Else
        \State $\sigma^t_i \gets \sigma_i^0$
    \EndIf 
    \State Play action $a^t_i$ according to $\sigma^t_i$
    \State The opponent plays action $a^t_{-i}$ with observed private 
    \State information $\theta^t_{-i}$, according to the unobserved 
    \State distribution $\sigma^t_{-i}$
    \State Update $M$ with opponent's actions, $a^t_{-i}$, and his 
    \State private information $\theta^t_{-i}$
    \State $\tau^t_{-i} \gets$ strategy for the opponent that plays a best 
    \State response against $\sigma^t_i$ subject to the constraint that it 
    \State plays $a^t_{-i}$  on the path of play with private 
    \State information $\theta^t_{-i}$
    \State $k^{t+1}\gets k^t+u_i(\sigma^t_{i},\tau^t_{-i})-v_i{'}(\sigma_i)$ %kt - vstar + expected utility of playing strat_i against the opponents action
\EndFor
\end{algorithmic}
\end{algorithm}

\begin{lemma}\label{thm:c}
EEFFE is prime-safe.
\end{lemma}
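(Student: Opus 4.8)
The plan is to mirror the proof of Lemma~\ref{thm:b}, exploiting that EEFFE uses the same update rule for $k^t$ as EEFEWP. Applying to this rule the same steps as in Lemma~\ref{thm:b} yields
\begin{gather*}
\sum_{t=1}^{T}u_i(\sigma^t_i,\sigma^t_{-i})\ \geq\ E[k^{T+1}]+T\,v_i{'}(\sigma_i),
\end{gather*}
so it suffices to show $k^{T+1}\geq 0$ on every play path, and then to conclude — exactly as in Lemma~\ref{thm:b} — that EEFFE guarantees at least $T\,v_i{'}(\sigma_i)$ in expectation and is therefore prime-safe (indeed $\text{expl}(\sigma_i^0)$-safe, since $v_i{'}(\sigma_i)=v_i-\text{expl}(\sigma_i^0)$). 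If the exploitation branch of EEFFE is never taken the claim is immediate: then $\sigma^t_i=\sigma_i^0$ for all $t$, and since $\tau^t_{-i}\in\Sigma_{-i}$ we have $u_i(\sigma_i^0,\tau^t_{-i})\geq\min_{\sigma_{-i}}u_i(\sigma_i^0,\sigma_{-i})=v_i{'}(\sigma_i)$, so $k^t$ is non-decreasing and $k^{T+1}\geq k^1=0$.

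The heart of the argument is to show $k^t\geq 0$ for all $t\in\{1,\dots,T+1\}$, which I would prove by induction on $t$, with base case $k^1=0$. For the inductive step, assume $k^t\geq 0$ and split on which branch the algorithm takes at round $t$. If it plays $\sigma^t_i=\sigma_i^0$, then as above $u_i(\sigma_i^0,\tau^t_{-i})\geq v_i{'}(\sigma_i)$, so $k^{t+1}=k^t+u_i(\sigma_i^0,\tau^t_{-i})-v_i{'}(\sigma_i)\geq k^t\geq 0$. If it switches to $\sigma^t_i=\sigma^t_{BR}$, then the EEFFE switching rule was met; reading this rule as the requirement that $k^t$ be at least $(T-t+1)\,\epsilon'$ — the worst-case cumulative amount by which playing a best response to the model rather than $\sigma_i^0$ can fall short of $v_i{'}(\sigma_i)$ over the remaining $T-t+1$ rounds, where $\epsilon'=v_i{'}(\sigma_i)-\min_{\sigma_{-i}}u_i(\sigma^t_{BR},\sigma_{-i})$ — and combining it with the one-round bound $u_i(\sigma^t_{BR},\tau^t_{-i})\geq\min_{\sigma_{-i}}u_i(\sigma^t_{BR},\sigma_{-i})=v_i{'}(\sigma_i)-\epsilon'$, we get
\begin{align*}
k^{t+1}&=k^t+u_i(\sigma^t_{BR},\tau^t_{-i})-v_i{'}(\sigma_i)\\
&\geq k^t-\epsilon'\ \geq\ (T-t+1)\,\epsilon'-\epsilon'\ =\ (T-t)\,\epsilon'\ \geq\ 0,
\end{align*}
using $t\leq T$ (and noting that if $\epsilon'<0$ the bound is trivial, since then $k^{t+1}\geq k^t$). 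This closes the induction and gives $k^{T+1}\geq 0$, completing the proof.

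The step where EEFFE genuinely differs from EEFEWP — and the one I expect to need the most care — is the exploitation branch of the induction. For EEFEWP one simply cites Lemma~\ref{thm:a}, because there exploitation is guarded by the \emph{per-round} inequality $\epsilon'\leq k^t$; in EEFFE the guard is the \emph{cumulative} one carrying the factor $T-t+1$, so one must check by hand that a single round of exploitation erodes the bank by at most a $1/(T-t+1)$ fraction and hence cannot drive it negative. One should also confirm that the argument is robust to the algorithm alternating between $\sigma_i^0$ and $\sigma^t_{BR}$ across rounds — the branch-by-branch induction handles this, since nonnegativity of $k^t$ is re-established every round regardless of which branch fires — and insensitive to $\sigma^t_{BR}$ (hence $\epsilon'$) being recomputed from the updated opponent model each round, which the bound above respects since it uses only the current-round quantities.
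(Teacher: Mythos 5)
Your proof is correct and follows the route the paper intends: the paper's own proof of Lemma~\ref{thm:c} is the single line that the arguments follow from Lemma~\ref{thm:b}, and your argument is exactly that expectation chain, with the one genuinely new ingredient --- the induction showing $k^t\geq 0$ under EEFFE's cumulative switching guard --- worked out explicitly (the paper's Lemma~\ref{thm:a} only covers EEFEWP's per-round guard $\epsilon'\leq k^t$, so this is the step the paper leaves unproved and you have rightly identified it as the crux). One caveat you should make explicit rather than pass over: your case analysis uses the guard $k^t\geq(T-t+1)\epsilon'$, whereas Algorithm~\ref{alg:cap5} as written states $k^t\geq(T-t+1)\bigl(v_i{'}(\sigma_i)-\epsilon'\bigr)$, i.e.\ $(T-t+1)\min_{\sigma_{-i}}u_i(\sigma^t_{BR},\sigma_{-i})$. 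Under that literal condition the inductive step does not close (one only gets $k^{t+1}\geq k^t-\epsilon'$ with no usable lower bound on $k^t$ in terms of $\epsilon'$), and the guard would typically already hold at $t=1$ whenever the best response has negative worst-case utility, contradicting the stated behaviour of exploiting only near the end of the match; your reading is the one matching the original BEFFE condition and the only one under which the lemma is true, so you have in effect corrected a typo in the algorithm and should say so.
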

\begin{proof}\ref{thm:c}.
The arguments follow similarly from the proof of Lemma 2.
\end{proof}

PRWYWE plays an $\epsilon$-best response at each iteration to the opponent model where $\epsilon= PSAFE(k^t)$. This allows the player to continually attempt to exploit at each iteration whilst having the aggression of the computed exploitative strategy directly constrained by the amount of gifts that have been accumulated over previous iterations through the update rule for $k^t$.

\begin{algorithm}%[!t]
\caption{PRWYWE for two-player zero-sum extensive-form games of imperfect information
where opponent’s private information is observed at the end of the game}\label{alg:cap6}
\begin{algorithmic}
\State \textbf{Input:} Strategy $\sigma_i^0$ for player $i$
\State $v_i{'}(\sigma_i) \gets min_{\sigma_{-i}}u_i(\sigma_i^0,\sigma_{-i})$
\State $k^1\gets0$
\For{$t = 1$ to $T$}
    \State $\sigma^t_{i} \gets argmax_{\sigma_i\in PSAFE(k^t)}u_i(\sigma_i,M(\sigma_{-i}))$
    \State Play action $a^t_i$ according to $\sigma^t_i$
    \State The opponent plays action $a^t_{-i}$ with observed private 
    \State information $\theta^t_{-i}$, according to the unobserved 
    \State distribution $\sigma^t_{-i}$
    \State Update $M$ with opponent's actions, $a^t_{-i}$, and his 
    \State private information $\theta^t_{-i}$
    \State $\tau^t_{-i} \gets$ strategy for the opponent that plays a best 
    \State response against $\sigma^t_i$ subject to the constraint that it 
    \State plays $a^t_{-i}$  on the path of play with private 
    \State information $\theta^t_{-i}$
    \State $k^{t+1}\gets k^t+u_i(\sigma^t_{i},\tau^t_{-i})-v_i{'}(\sigma_i)$ %kt - vstar + expected utility of playing strat_i against the opponents action
\EndFor
\end{algorithmic}
\end{algorithm}
\begin{lemma}\label{thm:d}
PRWYWE is prime-safe.
\end{lemma}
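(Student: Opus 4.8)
The plan is to mirror the proof of Lemma~\ref{thm:b} almost line for line, exploiting the fact that PRWYWE is structurally simpler than EEFEWP: it has no \textbf{if}/\textbf{else} branch, instead selecting $\sigma^t_i$ directly from $PSAFE(k^t)$, so the analogue of Lemma~\ref{thm:a} (that $k^t\geq 0$ for all $t\geq 1$) will come out immediately rather than needing a case split. The two ingredients to establish are therefore (i) the invariant $k^t\geq 0$, which is also what makes the line $\sigma^t_i\gets\argmax_{\sigma_i\in PSAFE(k^t)}u_i(\sigma_i,M(\sigma_{-i}))$ well posed, and (ii) the gift-accounting inequality that lower-bounds the accumulated payoff.

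For (i) I would induct on $t$. The base case is $k^1=0$. Assuming $k^t\geq 0$, the set $PSAFE(k^t)$ is nonempty since $\sigma_i^0\in PSAFE(0)\subseteq PSAFE(k^t)$: by the definition $v_i'(\sigma_i)=\min_{\sigma_{-i}}u_i(\sigma_i^0,\sigma_{-i})$ used to initialise the algorithm we have $u_i(\sigma_i^0,s_{-i})\geq v_i'(\sigma_i)$ for all $s_{-i}\in S_{-i}$, and recall $v_i'(\sigma_i)=v_i-\text{expl}(\sigma_i^0)$. Since $PSAFE(k^t)$ is compact and $u_i(\cdot,M(\sigma_{-i}))$ continuous, the $\argmax$ is attained, so $\sigma^t_i$ is well defined and $\sigma^t_i\in PSAFE(k^t)$; hence $u_i(\sigma^t_i,s_{-i})\geq v_i'(\sigma_i)-k^t$ for every (pure) $s_{-i}$, in particular for the constrained best response $\tau^t_{-i}$. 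The update rule then gives
\begin{gather*}
k^{t+1}=k^t+u_i(\sigma^t_{i},\tau^t_{-i})-v_i'(\sigma_i)\geq k^t-k^t=0,
\end{gather*}
closing the induction.

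For (ii) I would take expectations in the update rule and telescope exactly as in Lemma~\ref{thm:b}, using the inherited inequality $E[u_i(\sigma^t_i,\tau^t_{-i})]\leq u_i(\sigma^t_i,\sigma^t_{-i})$, which holds because $\tau^t_{-i}$ is a best response for the opponent among all strategies consistent with the observed $(a^t_{-i},\theta^t_{-i})$, so averaging over the opponent's own randomisation cannot raise our expected payoff. This yields $E[k^{T+1}]\leq\sum_{t=1}^{T}u_i(\sigma^t_i,\sigma^t_{-i})-Tv_i'(\sigma_i)$; rearranging and invoking $k^{T+1}\geq 0$ from part (i) gives $\sum_{t=1}^{T}u_i(\sigma^t_i,\sigma^t_{-i})\geq Tv_i'(\sigma_i)$, which is prime-safety, and as in Lemma~\ref{thm:b} it simultaneously exhibits PRWYWE as $\beta$-safe with $\beta=\text{expl}(\sigma_i^0)$.

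I expect the only genuinely new point, relative to the earlier proofs, to be the well-definedness of the constrained $\argmax$: one must verify both that $PSAFE(k^t)$ is never empty (handled above via $\sigma_i^0$) and that the chosen maximiser does lie in $PSAFE(k^t)$, since this membership is precisely what drives the $k^t\geq 0$ invariant and hence the safety bound. Everything downstream --- the expectation bookkeeping and the appeal to $E[u_i(\sigma^t_i,\tau^t_{-i})]\leq u_i(\sigma^t_i,\sigma^t_{-i})$ --- is inherited verbatim from the proof of Lemma~\ref{thm:b}.
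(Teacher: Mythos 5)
Your proposal is correct and follows essentially the same route as the paper, which simply defers to the proof of Lemma~\ref{thm:b} and the RWYWE safety argument of Ganzfried and Sandholm; you have spelled out exactly those two ingredients (the invariant $k^t\geq 0$ via membership of $\sigma^t_i$ in $PSAFE(k^t)$, and the telescoped expectation bound using $E[u_i(\sigma^t_i,\tau^t_{-i})]\leq u_i(\sigma^t_i,\sigma^t_{-i})$). Your additional observation about the nonemptiness of $PSAFE(k^t)$, witnessed by $\sigma_i^0$, is a detail the paper leaves implicit but is handled correctly here.
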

\begin{proof}\ref{thm:d}.
The arguments follow similarly from the proof of Lemma 2 and from the proof of RWYWE in \cite{ganzfried2015safe}.
\end{proof}

These algorithms allow any valid strategy to attempt to exploit their opponent no matter what degree of initial exploitability the strategy presents, while still guaranteeing a game-theoretic lower-bound on the utility to be gained by doing so. The algorithms are directly applicable to agents in two-player zero-sum imperfect information games with some $\epsilon$-equilibria due to using CFR over finitely many iterations or by translating Nash equilibria strategies calculated for abstracted instances of a game to the full game. Both of these scenarios are very commonplace, especially when operating in much larger domains so the application of the modified algorithms to an agent's strategy may be beneficial if empirical results show potential. 

We wish to demonstrate EEFEWP, EEFFE and PRWYWE's performance specifically within small abstracted settings of imperfect information to test the limits of the algorithm's capabilities.

\section{Experiments}\label{sec:experiments}
%======================================================================
In line with previous work demonstrating the performance of exploitation algorithms within HU NLHE poker \cite{hoehn2005effective}\cite{ganzfried2015safe}, we test the performance of the EEFEWP, EEFFE and PRWYWE algorithms on variants of simplified Kuhn poker to evaluate each algorithm's potential empirically. 

We want to specifically establish variants of poker, such that we can abstract the game to some degree with regard to action and card abstraction. Instead of using a larger game such as Leduc poker \cite{southey2012bayes}, we aim to keep the size of the game as small as possible to allow us to conduct multiple experiments with a lower measured variance.

\paragraph{Kuhn poker}
Kuhn poker is a very simple setting of poker, used to model various techniques in imperfect, two-player zero-sum games \cite{kuhn1950simplified}.

Kuhn poker is a repeated game played with two players, P1 and P2, and a deck of 3 cards $c_1,c_2,c_3$ such that $c_1>c_2>c_3$, where the detailed rules of the game are as follows:
\begin{itemize}
    \item[$1)$] At the start of an iteration of the game, each player initially bets £1 into the pot (antes £1).
    \item[$2)$] Then, both players are dealt 1 card each at random from a deck of 3 cards and the last card remains unseen. Note that all cards aside from the player's own card are unobserved at this point. 
    \item[$3)$] P1 then chooses to bet £1 into the pot or pass the action on to P2 (check to P2).
        \begin{itemize}
            \item[$-$] If P1 chooses to bet £1, P2 can either match the bet of £1 submitted to the pot (call) or forfeit their hand by folding.
            \begin{itemize}
                \item[$-$] If P2 chooses to fold. P1 wins the pot.
            \end{itemize}
            \item[$-$] If P1 chooses to check, P2 can either bet £1 towards the pot or pass on making any bet (check).
                \begin{itemize}
                    \item[$-$] If P2 chooses to bet £1, P1 can either match the bet of £1 submitted to the pot (call) or forfeit their hand by folding.
                    \begin{itemize}
                        \item[$-$] If P1 chooses to fold. P2 wins the pot.
                    \end{itemize}
                \end{itemize}
        \end{itemize}
    \item[$4)$] Finally, if neither P1 or P2 has folded, both players make their private cards observable and the larger card wins the pot.
\end{itemize}
% \begin{figure}[h]
% \centering
% \includegraphics[scale=0.42]{LaTeX/equilibriatable.png}
% \caption{Table representing the full Nash equilibria strategy family paramaterised by $\alpha \in [0,1/3]$. The first row represents the non-terminal public history sequences, where B=Bet and K=Check and the first column represents the three possible cards in the deck. Green columns represent decision points for P1 and red columns represent decision points for P2. The (Row, Column) or (Card, History) pair represents the information set at a specific game state and the respective entries represent the probability distribution of playing a betting or passing action respectively.} 
% \end{figure}

The value of the game $v_1$ for P1 is -1/18 and due to the notion of operating in a two-player zero-sum environment, $v_2$ for P2 is +1/18. Intuitively, the negative value of the game for P1 is due to P2 having the advantage of being able to select an action having observed P1's action. This observation of P1's first action can often imply at least some information about the theoretical strength of P1's private card. Therefore, a negative average payoff over a number of games for P1 is not directly indicative of a weak strategy.
\paragraph{Generalised Kuhn Poker.}
% Kuhn poker is a very simple setting of poker, used to model various techniques in imperfect, two-player zero-sum games \cite{kuhn1950simplified}.

Since we want to examine the behaviour of our algorithm on abstracted settings, we wish to generalise the original game of Kuhn poker to be able to support abstractions. We examine the action and card abstraction settings separately to refine our analysis of the algorithm's performance. For the card-abstraction setting, we generalise vanilla Kuhn poker to accept a deck of 6 cards A, K, Q, J, T and 9 and translate $\epsilon$-equilibria strategies from lower dimension games of 3, 4 and 5 cards to the full game. Similarly, for the action-abstraction environment, we generalise the original Kuhn poker game by allowing 4 different bet sizes £1, £2, £3 and £4 and translate $\epsilon$-equilibria strategies from lower dimension games of 1, 2 and 3 bet sizes to the full game.

After we compute these strategies, we test them against four classes of opponents and evaluate the performance of the abstracted agents before and after using the exploitation algorithms.

\paragraph{Abstraction Method.}
Since the accuracy of our abstracted strategies is not assumed to be high by our algorithm we would ideally demonstrate an improvement in performance for all static strategies. Consequently, we simply translate each of the smaller settings of the games to the full game using a suitable uniform mapping. More sophisticated mappings have been previously established that aim to reduce the exploitability of abstracted game strategies' performance post-translation to the full game \cite{ganzfried2013action}. 

Since we are working within a very small domain, we expect to see pretty large amounts of exploitability between abstraction levels. It also is not really of much concern how exploitable each layer of abstraction is, since ideally, we would like our algorithm to improve the performance of any static strategy. As a result, we simply implement a uniform mapping for both card and action abstraction environments.
\begin{table*}%[ht]
    \centering
    \begin{tabular}{ccccccc}
        \toprule
        \multicolumn{7}{c}{6-card Kuhn Poker experiments} \\ \midrule
        &Abstraction & $v_1{'}$ & Random & Sophisticated & Dynamic & Equilibrium \\ \midrule
        \multirow{4}{*}{EEFFE} & None & -0.0611 & 0.1717 ± 0.0004 & -0.0370 ± 0.0005 & -0.0420 ± 0.0004 & -0.0611 ± 0.0005 \\
        & 5 cards & -0.0721 & 0.0786 ± 0.0005 & -0.0561 ± 0.0004 & -0.0607 ± 0.0004 & -0.0688 ± 0.0005 \\
        & 4 cards & -0.0750 & 0.1155 ± 0.0005 & -0.0422 ± 0.0005 & -0.0607 ± 0.0004 & -0.0610 ± 0.0004 \\
        & 3 cards & -0.0978 & 0.0702 ± 0.0004 & -0.0541 ± 0.0004 & -0.0851 ± 0.0004 & -0.0707 ± 0.0005 \\ \midrule
        \multirow{4}{*}{EEFEWP} & None & -0.0611 & 0.2017 ± 0.0006 & -0.0366 ± 0.0005 & -0.0397 ± 0.0005 & -0.0611 ± 0.0005 \\
        & 5 cards & -0.0721 & 0.1172 ± 0.0008 & -0.0555 ± 0.0005 & -0.0581 ± 0.0005 & -0.0679 ± 0.0005 \\
        & 4 cards & -0.0750 & 0.1419 ± 0.0008 & -0.0418 ± 0.0005 & -0.0577 ± 0.0005 & -0.0611 ± 0.0005 \\
        & 3 cards & -0.0978 & 0.0973 ± 0.0009 & -0.0533 ± 0.0004 & -0.0823 ± 0.0004 & -0.0695 ± 0.0004 \\ \midrule
        \multirow{4}{*}{PRWYWE} & None & -0.0611 & 0.2701 ± 0.0006 & -0.0360 ± 0.0005 & -0.0352 ± 0.0005 & -0.0609 ± 0.0005 \\
        & 5 cards & -0.0721 & 0.3710 ± 0.0007 & -0.0298 ± 0.0005 & -0.0347 ± 0.0005 & -0.0612 ± 0.0005 \\
        & 4 cards & -0.0750 & 0.4004 ± 0.0007 & -0.0266 ± 0.0005 & -0.0339 ± 0.0005 & -0.0619 ± 0.0004 \\
        & 3 cards & -0.0978 & 0.4813 ± 0.0005 & -0.0192 ± 0.0005 & -0.0472 ± 0.0005 & -0.0621 ± 0.0005 \\ \midrule
        \multirow{4}{*}{Equilibrium} & None & -0.0611 & 0.1407 ± 0.0004 & -0.0376 ± 0.0005 & -0.0412 ± 0.0004 & -0.0611 ± 0.0005 \\
        & 5 cards & -0.0721 & 0.0501 ± 0.0004 & -0.0568 ± 0.0005 & -0.0596 ± 0.0004 & -0.0692 ± 0.0004 \\
        & 4 cards & -0.0750 & 0.0791 ± 0.0004 & -0.0424 ± 0.0004 & -0.0597 ± 0.0004 & -0.0612 ± 0.0004 \\
        & 3 cards & -0.0978 & 0.0356 ± 0.0004 & -0.0553 ± 0.0004 & -0.0845 ± 0.0004 & -0.0706 ± 0.0004 \\ \bottomrule
        % \multicolumn{2}{l}{Best Response} & 0.3001 ± 0.0006 & -0.0204 ± 0.0004 & -0.0205 ± 0.0004 & -0.0555 ± 0.0004 \\ \bottomrule
    \end{tabular}
    \caption{For each algorithm, the agent plays subject to four different degrees of abstracted strategies within the 6-card Kuhn Poker variant. All agent's performances are measured by the total payoff/total hands played against an opponent. Each normalised payoff includes a ± to be the 95\% confidence interval established after all hands are played against the relevant opponent.}
\end{table*}
\begin{table*}%[!ht]
    \centering
    \begin{tabular}{ccccccc}
        \toprule
        \multicolumn{7}{c}{4-bet size Kuhn Poker experiments} \\ \midrule
        & Abstraction & $v_1{'}$ & Random & Sophisticated & Dynamic & Equilibrium \\ \midrule
        \multirow{4}{*}{EEFFE} & None & -0.0556 & 0.3206 ± 0.0006 & -0.0265 ± 0.0004 & -0.0287 ± 0.0004 & -0.0556 ± 0.0004 \\
        & 3 bet sizes & -0.1178 & 0.3973 ± 0.0006 & -0.0219 ± 0.0004 & -0.0861 ± 0.0005 & -0.0556 ± 0.0004 \\
        & 2 bet sizes & -0.1789 & 0.5029 ± 0.0007 & -0.0243 ± 0.0004 & -0.1298 ± 0.0005 & -0.0613 ± 0.0004 \\
        & 1 bet size & -0.2972 & 0.5802 ± 0.0011 & -0.0320 ± 0.0004 & -0.2376 ± 0.0005 & -0.0707 ± 0.0004 \\ \midrule
        \multirow{4}{*}{EEFEWP} & None & -0.0556 & 0.4453 ± 0.0007 & -0.0175 ± 0.0004 & -0.0071 ± 0.0005 & -0.0551 ± 0.0004 \\
        & 3 bet sizes & -0.1178 & 0.4472 ± 0.0008 & -0.0159 ± 0.0004 & -0.0637 ± 0.0005 & -0.0553 ± 0.0004 \\
        & 2 bet sizes & -0.1789 & 0.5702 ± 0.0009 & -0.0188 ± 0.0004 & -0.1115 ± 0.0005 & -0.0570 ± 0.0004 \\
        & 1 bet size & -0.2972 & 0.6483 ± 0.0008 & -0.0157 ± 0.0004 & -0.2143 ± 0.0005 & -0.0682 ± 0.0004 \\ \midrule
        \multirow{4}{*}{PRWYWE} & None & -0.0556 & 0.5826 ± 0.0009 & -0.0097 ± 0.0004 & 0.0024 ± 0.0004 & -0.0556 ± 0.0004 \\
        & 3 bet sizes & -0.1178 & 0.7701 ± 0.0017 & -0.0026 ± 0.0004 & -0.0428 ± 0.0004 & -0.0556 ± 0.0004 \\
        & 2 bet sizes & -0.1789 & 0.6514 ± 0.0011 & 0.0013 ± 0.0004 & -0.0969 ± 0.0004 & -0.0574 ± 0.0004 \\
        & 1 bet size & -0.2972 & 0.6622 ± 0.0011 & -0.0073 ± 0.0004 & -0.1547 ± 0.0005 & -0.0571 ± 0.0004 \\ \midrule
        \multirow{4}{*}{Equilibrium} & None & -0.0556 & 0.2389 ± 0.0006 & -0.0269 ± 0.0004 & -0.0176 ± 0.0004 & -0.0556 ± 0.0004 \\
        & 3 bet sizes & -0.1178 & 0.2671 ± 0.0006 & -0.0232 ± 0.0004 & -0.0791 ± 0.0005 & -0.0556 ± 0.0004 \\
        & 2 bet sizes & -0.1789 & 0.3809 ± 0.0007 & -0.0303 ± 0.0004 & -0.1217 ± 0.0005 & -0.0617 ± 0.0004 \\
        & 1 bet size & -0.2972 & 0.4323 ± 0.0007 & -0.0420 ± 0.0004 & -0.2232 ± 0.0006 & -0.0766 ± 0.0004 \\ \bottomrule
        % \multicolumn{2}{l}{Best Response} & 0.3001 ± 0.0006 & -0.0204 ± 0.0004 & -0.0205 ± 0.0004 & -0.0556 ± 0.0004 \\ \bottomrule
    \end{tabular}
    \caption{The agent player plays subject to four different degrees of abstracted strategies within the 4-bet size Kuhn Poker variant. Otherwise the same details apply as outlined in Table 1.}
\end{table*}
\paragraph{Opponent Classes.}
After developing our two test environments, we need various opponent archetypes to play against to explore the algorithms' potential. These opponent classes directly follow those constructed in the 2015 paper on Safe Opponent Exploitation \cite{ganzfried2015safe} as we are directly adapting from one of the paper's algorithms. The four types of opponents are:
\begin{itemize}
    \item Random: This opponent chooses an available action at every decision point uniformly at random.
    \item Dynamic: Opponent chooses an available action at every decision point uniformly at random for the first 100 hands and then switches to playing a full best response strategy for the last 900.
    \item Equilibrium: Opponent plays a low exploitability $\epsilon$-equilibrium strategy calculated via 100,000,000 iterations of MCCFR.
    \item Sophisticated: At each decision point, the opponent plays within $p$ $=$ 0.2 of a low exploitability $\epsilon$-equilibrium strategy calculated via MCCFR.
\end{itemize}

\paragraph{Algorithm Initialisation.}
In testing, we play our agent as P1 for 1,000 hands against each of the four opponent classes in both card and action abstraction settings, with and without each exploitation algorithm over 40,000 repetitions. P1's base static $\epsilon$-equilibrium strategy is pre-computed for some small $\epsilon>0$ via 100,000,000 iterations of MCCFR in each abstracted setting. We want our choice of modelling algorithm to fairly represent the opponent, thus we use a model similar to the one used in the DBBR algorithm \cite{ganzfried2011game}. 

We make a slight modification to the Dirichlet Prior distribution in DBBR, instead of initialising the action probabilities from a Nash equilibrium strategy, we use the $\epsilon$-equilibria strategies obtained after translating from the abstracted settings to the full games. We define our opponent model $M$ with respect to the modified Dirichlet Prior distribution initialised by $5$-prior hands as this produced favourable empirical performance.

%\paragraph{Results and Evaluation.}

%Retest EEFEWP 2 bet equil
\paragraph{Results Table.}
After computing the normalised payoff of all layers of abstraction in both environments against the four opponent classes both with and without the exploitation algorithms, we can begin to evaluate the success of employing each of the different algorithms in these different scenarios. We measure each entry as an average payoff from playing 1,000 hands over 40,000 iterations subject to a 95\% confidence interval to ensure statistical significance in our findings. Similarly to vanilla Kuhn poker, since we play as player 1, the value of the game $v_1$ is negative and is not representative of a poor payoff.

\paragraph{Results Analysis and Evaluation.}

Against all opponent classes and for all abstractions, both the multiple cards and bet size environments benefit significantly from implementing both EEFEWP and PRWYWE.

Despite a lower exploitability, the 4-card abstraction outperforms the 5-card abstraction against all opponent classes before and after using the exploitation algorithms aside from a marginal difference against the dynamic opponent without exploitation. This is to be expected since if a strategy has a lower worst-case utility than another strategy, it is not the case that that strategy is guaranteed to obtain a lower utility against all other possible strategies in the space versus a strategy with a higher worst-case utility. Furthermore, the number of possible opponent strategies that can maximally exploit a given strategy may be significantly fewer than a less exploitable strategy, but due to our worst-case analysis a strategy's strength is not necessarily implied empirically.

Similarly, all agents in the multiple bet size setting are seen to benefit significantly from the usage of the EEFEWP and PRWYWE algorithm. Due to the increased amount of bet sizes, the bet abstracted agents are much more exploitable than the abstracted agents in the 6-card environment. The more frugal betting strategy of less exploitable agents allows for a more consistent spread of performance over all opponent classes in contrast to the weaker agent's higher payoff mainly against the random opponent class. Additionally, the more exploitable agents are able to consistently reduce the gap between playing their original inferior strategy and performance from playing a Nash equilibrium strategy as a result of prime-safe exploitation.

The EEFFE algorithm underperforms when compared to the non-exploiting player playing against the dynamic opponent. This weak performance was similarly demonstrated by the original BEFFE algorithm's performance. Although $v_1{'}$ is still maintained by EEFFE, the empirical performance of the algorithm is completely outclassed by the other two exploitation algorithms.

Overall, both EEFEWP and PRWYWE produce a very promising set of results over both abstraction settings. Significant gain is established by every agent playing strategies from all levels of abstraction via prime-safe exploitation and as expected, the prime value of the game $v_1{'}$ is never jeopardised in any case.

\section{Conclusion} 
We have shown that opponent exploitation algorithms maintaining a sensible lower bound can be successfully applied to agents playing $\epsilon$-equilibria strategies in extensive-form games of imperfect information. We introduced three algorithms for this setting by generalising the algorithms presented in previous work \cite{ganzfried2015safe} to operate in the case where our agent's strategy may be approximate and formally proved its lower bound guarantee. This lower bound was proved as a result of introducing the concept of prime-safety, which guarantees the worst-case utility of playing an $\epsilon$-equilibrium strategy. The lower bound is significant due to opponent exploitation algorithms requiring us to play a dynamic strategy, where we could theoretically perform much worse than our lower bound.

Upon testing over two different types of abstraction, we showed that PRWYWE and EEFEWP display strong performances across the board. This strength held over a diverse selection of opponent archetypes including dynamic opponents, who played an unrealistically exploitative strategy for 90\% of the 1000 hands played.

Practically, the EEFEWP algorithm may be preferable to PRWYWE as it does not rely on an $\epsilon$-best response to be computed at each iteration of the game. $\epsilon$-best responses  are expensive to compute relative to the size of the decision tree \cite{johanson2007computing} in comparison to the standard best-response computation required in PRWYWE. 

The success of our generalised algorithms can be extended to other games to allow for a variety of strategies to benefit from exploitation. For much larger domains like HU NLHE poker, PRWYWE's computational requirements are feasible \cite{johanson2011accelerating}, however since AI built for solving HU NLHE are often using their real-time decision-making to compute methods like subgame solving, additionally implementing a prime-safe opponent exploitation algorithm would need to be admissible by the hardware.

\bibliography{ecai}

\begin{thebibliography}{10}

\bibitem{blythe2011testing}
Jim Blythe, Aaron Botello, Joseph Sutton, David Mazzocco, Jerry Lin, Marc
  Spraragen, and Michael Zyda, `Testing cyber security with simulated humans',
  in {\em Twenty-Third IAAI Conference}, (2011).

\bibitem{bowling2015heads}
Michael Bowling, Neil Burch, Michael Johanson, and Oskari Tammelin, `Heads-up
  limit hold’em poker is solved', {\em Science}, {\bf 347}(6218),  145--149,
  (2015).

\bibitem{brown2017safe}
Noam Brown and Tuomas Sandholm, `Safe and nested subgame solving for
  imperfect-information games', {\em Advances in neural information processing
  systems}, {\bf 30}, (2017).

\bibitem{brown2018superhuman}
Noam Brown and Tuomas Sandholm, `Superhuman ai for heads-up no-limit poker:
  Libratus beats top professionals', {\em Science}, {\bf 359}(6374),  418--424,
  (2018).

\bibitem{brown2018depth}
Noam Brown, Tuomas Sandholm, and Brandon Amos, `Depth-limited solving for
  imperfect-information games', {\em Advances in neural information processing
  systems}, {\bf 31}, (2018).

\bibitem{campbell2002deep}
Murray Campbell, A~Joseph Hoane~Jr, and Feng-hsiung Hsu, `Deep blue', {\em
  Artificial intelligence}, {\bf 134}(1-2),  57--83, (2002).

\bibitem{daskalakis2009complexity}
Constantinos Daskalakis, Paul~W Goldberg, and Christos~H Papadimitriou, `The
  complexity of computing a nash equilibrium', {\em SIAM Journal on Computing},
  {\bf 39}(1),  195--259, (2009).

\bibitem{ganzfried2018successful}
Sam Ganzfried, Austin Nowak, and Joannier Pinales, `Successful nash equilibrium
  agent for a three-player imperfect-information game', {\em Games}, {\bf
  9}(2), ~33, (2018).

\bibitem{ganzfried2011game}
Sam Ganzfried and Tuomas Sandholm, `Game theory-based opponent modeling in
  large imperfect-information games', in {\em The 10th International Conference
  on Autonomous Agents and Multiagent Systems-Volume 2}, pp. 533--540.
  Citeseer, (2011).

\bibitem{ganzfried2013action}
Sam Ganzfried and Tuomas Sandholm, `Action translation in extensive-form games
  with large action spaces: Axioms, paradoxes, and the pseudo-harmonic
  mapping', in {\em {IJCAI} 2013, Proceedings of the 23rd International Joint
  Conference on Artificial Intelligence, Beijing, China, August 3-9, 2013},
  ed., Francesca Rossi, pp. 120--128. {IJCAI/AAAI}, (2013).

\bibitem{ganzfried2015safe}
Sam Ganzfried and Tuomas Sandholm, `Safe opponent exploitation', {\em ACM
  Transactions on Economics and Computation (TEAC)}, {\bf 3}(2),  1--28,
  (2015).

\bibitem{hoehn2005effective}
Bret Hoehn, Finnegan Southey, Robert~C Holte, and Valeriy Bulitko, `Effective
  short-term opponent exploitation in simplified poker', in {\em AAAI},
  volume~5, pp. 783--788, (2005).

\bibitem{johanson2013measuring}
Michael Johanson, `Measuring the size of large no-limit poker games', {\em
  arXiv preprint arXiv:1302.7008}, (2013).

\bibitem{johanson2011accelerating}
Michael Johanson, Kevin Waugh, Michael Bowling, and Martin Zinkevich,
  `Accelerating best response calculation in large extensive games', in {\em
  Twenty-second international joint conference on artificial intelligence},
  (2011).

\bibitem{johanson2007computing}
Michael Johanson, Martin Zinkevich, and Michael Bowling, `Computing robust
  counter-strategies', {\em Advances in neural information processing systems},
  {\bf 20}, (2007).

\bibitem{kuhn1950simplified}
Harold~W Kuhn, `A simplified two-person poker', {\em Contributions to the
  Theory of Games}, {\bf 1},  97--103, (1950).

\bibitem{lanctot2009monte}
Marc Lanctot, Kevin Waugh, Martin Zinkevich, and Michael Bowling, `Monte carlo
  sampling for regret minimization in extensive games', {\em Advances in neural
  information processing systems}, {\bf 22}, (2009).

\bibitem{liu2022safe}
Mingyang Liu, Chengjie Wu, Qihan Liu, Yansen Jing, Jun Yang, Pingzhong Tang,
  and Chongjie Zhang, `Safe opponent-exploitation subgame refinement', in {\em
  Advances in Neural Information Processing Systems}, (2022).

\bibitem{mccracken2004safe}
Peter McCracken and Michael Bowling, `Safe strategies for agent modelling in
  games.', in {\em AAAI Technical Report (2)}, pp. 103--110, (2004).

\bibitem{nash1951non}
John Nash, `Non-cooperative games', {\em Annals of mathematics},  286--295,
  (1951).

\bibitem{neumann1928theorie}
J~v Neumann, `Zur theorie der gesellschaftsspiele', {\em Mathematische
  annalen}, {\bf 100}(1),  295--320, (1928).

\bibitem{pita2012robust}
James Pita, Richard John, Rajiv Maheswaran, Milind Tambe, and Sarit Kraus, `A
  robust approach to addressing human adversaries in security games', in {\em
  ECAI 2012},  660--665, IOS Press, (2012).

\bibitem{sandholm2015abstraction}
Tuomas Sandholm, `Abstraction for solving large incomplete-information games',
  in {\em Twenty-Ninth AAAI Conference on Artificial Intelligence}, (2015).

\bibitem{shannon1950xxii}
Claude~E Shannon, `Xxii. programming a computer for playing chess', {\em The
  London, Edinburgh, and Dublin Philosophical Magazine and Journal of Science},
  {\bf 41}(314),  256--275, (1950).

\bibitem{silver2016mastering}
David Silver, Aja Huang, Chris~J Maddison, Arthur Guez, Laurent Sifre, George
  Van Den~Driessche, Julian Schrittwieser, Ioannis Antonoglou, Veda
  Panneershelvam, Marc Lanctot, et~al., `Mastering the game of go with deep
  neural networks and tree search', {\em nature}, {\bf 529}(7587),  484--489,
  (2016).

\bibitem{southey2012bayes}
Finnegan Southey, Michael~P Bowling, Bryce Larson, Carmelo Piccione, Neil
  Burch, Darse Billings, and Chris Rayner, `Bayes' bluff: Opponent modelling in
  poker', {\em arXiv preprint arXiv:1207.1411}, (2012).

\bibitem{zinkevich2007regret}
Martin Zinkevich, Michael Johanson, Michael Bowling, and Carmelo Piccione,
  `Regret minimization in games with incomplete information', {\em Advances in
  neural information processing systems}, {\bf 20}, (2007).

\end{thebibliography}
\end{document}